\theoremstyle{plain}
\newtheorem{property}{Property}
\newtheorem{lemma}{Lemma}
\newtheorem{definition}{Definition}
\newtheorem{theorem}{Theorem}
\newcommand{\MBDAG}{$(T,t_0$,$S_B)$-Multithreaded block DAG\xspace}
\begin{document}

\title{Blockclique: Scaling Blockchains \\ through Transaction Sharding \\ in a Multithreaded Block Graph}

\author{\IEEEauthorblockN{S\'ebastien Forestier, Damir Vodenicarevic, Adrien Laversanne-Finot}
\IEEEauthorblockA{Massa Labs, Paris, France \\ contact@massa.network}
}

\maketitle

\begin{abstract}
Decentralized crypto-currencies based on the blockchain architecture under-utilize available network bandwidth, making them unable to scale to thousands of transactions per second. We define the Blockclique architecture, that addresses this limitation by sharding transactions in a block graph with a fixed number of threads. The architecture allows the creation of intrinsically compatible blocks in parallel, where each block references one previous block of each thread. The consistency of the Blockclique protocol is formally established in presence of attackers. An experimental evaluation of the architecture's performance in large realistic networks demonstrates an efficient use of available bandwidth and a throughput of thousands of transactions per second.
\end{abstract}

\section{Introduction}

In a decentralized crypto-currency network, any node can join the network, fetch from peers and verify the history of executed transactions, create and broadcast new candidate transactions, and execute sets of candidate transactions. 

In order to achieve a global consensus on the history of transactions, the protocol must regulate the execution of transactions.
Decentralized protocols perform a Sybil-resistant selection of nodes that are allowed to execute transactions in a timely manner.
Proof-of-Work, used in Bitcoin~\cite{nakamoto2008bitcoin} and other protocols, selects random nodes to create blocks of transactions depending on the nodes' use of computational power.
Proof-of-Stake (e.g. in Tezos~\cite{goodman2014tezos}) randomly selects block producers based on the amount of coins they hold.

Even with a regulated selection of block producers, the latency in peer-to-peer networks can cause different nodes to produce or observe incompatible blocks, requiring protocol-defined consensus rules to specify which blocks of transactions should be preferred.
In Bitcoin, the Nakamoto rule states that the chain with the most work should be preferred and built upon.
However, if too many blocks are produced and the network can't broadcast them fast enough, the rate of incompatible (stale) blocks can become too high and the consensus can fail.

Blockchain protocols keep the stale rate low by limiting block size and frequency, which also reduces network usage and limits transaction throughput ($5$ tx/s in Bitcoin).
The median bandwidth of Bitcoin nodes is $56$\,Mb/s~\cite{gencer2018decentralization} but a single $8$\,Mb block is propagated every $600$\,s on average, leaving network connections largely underused.
As a result, the consensus bitrate $C_B$ defined as the average bitrate of blocks assuming they are full, is $C_B=13$\,kb/s in Bitcoin.
The consensus bitrate is set by protocol parameters, but effective network properties such as the number of nodes and their bandwidth and latency make it possible or not to sustain that bitrate with a low stale rate.
Although Bitcoin's consensus bitrate could be safely increased by some margin, as illustrated by the Bitcoin Cash increased block size, the thousand-fold difference between the consensus bitrate and the actual bandwidth of peers underlines the inefficiency of relying on a single chain~\cite{croman2016scaling}.

This paper introduces the Blockclique architecture, a natural extension of blockchains which optimizes network usage by parallelizing block creation into $T$ threads.
Blockclique uses transaction sharding to ensure that the transactions contained in blocks created in parallel are always compatible: a block $b$ in a thread $\tau$ can only include transactions with input addresses assigned to the thread $\tau$, while transaction outputs can belong to any thread.
Simply using $T$ separate blockchains would however split network resources and degrade the security of the protocol by a factor $T$ compared to a single blockchain. 
Therefore, we do not shard the network of nodes, so that all nodes produce and verify blocks in all threads.
Different threads are synchronized by linking their blocks together in a directed acyclic graph structure (DAG) where each block acknowledges one parent block in each thread by including its hash.
Contrary to network sharding solutions, all nodes process all blocks of all threads so that there is no need for cross-shard communication other than the cross-thread parent links in the graph.

\begin{table*}[t]
\centering
\normalsize
\begin{threeparttable}[b]
\caption{Comparison of Decentralized Crypto-Currency Protocols.}
\begin{tabular}{|c||c|c|c|c|c|c|}
 \hline
 Protocol & \begin{tabular}[c]{@{}c@{}}Data\\ Structure\end{tabular} & Sharding & \begin{tabular}[c]{@{}c@{}}Sybil\\ Resistance\end{tabular} & \begin{tabular}[c]{@{}c@{}}Consensus\\ Family\end{tabular} & Security\,\tnote{a} & Throughput\,\tnote{b} \\
 \hline
 Bitcoin~\cite{nakamoto2008bitcoin} & Block tree & No & Proof-of-Work & Nakamoto & 50\%  & 7 tx/s\\
 SPECTRE~\cite{sompolinsky2016spectre} & Block DAG & No & Proof-of-Work & Nakamoto & 50\% & Not Avail.\,\tnote{c} \\
 Conflux~\cite{li2018scaling} & Block DAG & No & Proof-of-Work & Nakamoto & 50\% & 6400 tx/s\,\tnote{c} \\
 OHIE~\cite{yu2018ohie} & Parallel Trees & No & Proof-of-Work & Nakamoto & 50\% & 2420 tx/s\,\tnote{c}   \\
 Avalanche \cite{rocket2019scalable} & Tx DAG & No & Any & Metastability & 20\%  & 3400 tx/s\\
 Elastico~\cite{luu2016secure} & Block tree & Network+Tx & Proof-of-Work & Byzantine & 25\%& 16 bx / 110s \\
 Omniledger~\cite{kokoris2018omniledger} & Block DAG & Network+UTXO & Any & Byzantine & 25\%  & 500 tx/s\\
 \textbf{Blockclique (ours)} & Multithread. DAG & Transaction & Any & Nakamoto & 45\%  & 10000 tx/s \\
 \hline
\end{tabular}
\begin{tablenotes}
\item [a] \small Maximum resource proportion of attackers under which the protocol is secure. Threat models may differ.
\item [b] Assumptions on the number of nodes in the network and their bandwidth may differ.
\item [c] Non-unique transactions. Transactions can appear multiple times in the structure, reducing the effective throughput.
\end{tablenotes}
\end{threeparttable}
\label{table:comparison}
\end{table*}

After defining the blockclique data structure, we establish a Blockclique-specific consensus mechanism derived from the Nakamoto consensus rule.
We then study the security of the protocol, formally prove its consistency, and derive optimal parameters for security and performance.
In our network simulations, the Blockclique architecture exceeds $10,000$ transactions per second with a transaction time of less than a minute, in a large network with realistic properties.
Overall, our results show that it is not necessary to adopt radically different blockchain protocols to obtain a scaled and decentralized currency, and that a natural parallelization of blockchains makes an efficient use of peer-to-peer networks.

\section{Related Work}

Previous attempts at scaling decentralized blockchains through sharding and/or changes in data structure are especially relevant to our work, and are summarized in Table~\ref{table:comparison}.

\subsubsection{Changes in Data Structure}

One line of work seeks to scale blockchains by extending the classical block tree structure to a structure allowing a parallel production of blocks and transactions. 
The first directed acyclic block graph (block DAG) structures appear in \cite{inclusive,sompolinsky2015secure,sompolinsky2016spectre}.
In SPECTRE~\cite{sompolinsky2016spectre}, nodes create blocks in parallel in a block DAG, and a voting process sorts transactions and chooses which ones are executed.
Similarly, the Conflux \cite{li2018scaling} and OHIE \cite{yu2018ohie} protocols allow the creation of parallel blocks in a DAG or a set of parallel chains.
However, as those different protocols do not implement transaction sharding, the parallel blocks can contain the same transactions many times which can drastically reduce the effective transaction throughput.

In IOTA~\cite{tangle}, transactions are included in a transaction DAG. 
To emit a new transaction, a node attaches it to two tip transactions of its local DAG, solves a small Proof-of-Work puzzle and broadcasts the transaction.
A coordinator run by the IOTA foundation provides checkpoints every minute so that nodes consistently verify transactions, and so that the DAG does not grow excessively in width. 
Avalanche~\cite{rocket2019scalable} uses a data structure similar to IOTA. 
Users are free to choose which transactions they want to reference, and therefore need to be incentivized to help build a DAG with limited width.
Blockclique also uses a DAG structure, but restricts it to a fixed number of threads, allowing the DAG to grow only in one direction. 
This greatly simplifies the protocol and its analysis, and removes the need for central entities \cite{tangle} or incentives for users to grow the DAG in a single direction~\cite{rocket2019scalable}.

\subsubsection{Sharding}

Sharding consists in distributing nodes and/or transactions into several groups (``shards'') for parallel processing. 
Most existing sharding protocols rely on network sharding: nodes are divided into groups, each processing a given subset of the data \cite{luu2016secure,kokoris2018omniledger,zilliqa}.
A ``directory'' group is then responsible for aggregating blocks coming from all shards into a single blockchain.
Group members are typically selected using a PoW puzzle. 
For resilience against attackers with a large fraction of the computational power, each group must contain a large number of members. 
However, consensus within a group is typically achieved using classical Byzantine Fault Tolerant protocols, which do not scale well~\cite{bano2017consensus,berger2018scaling} and cause the transaction throughput to decrease with group size.
As a result, such schemes face a security-performance dilemma.

Blockclique shards transactions in order to parallelize block creation, and does not rely on network sharding. 
As a result, it is closer to traditional blockchain protocols: each participant is randomly selected to create blocks in all threads, and verifies blocks of all threads. 
A consensus rule applied by all nodes determines in all threads which blocks should be considered confirmed.

To our knowledge, Blockclique is the first protocol to combine a parallel block structure to improve transaction throughput with transaction sharding to avoid the duplication of transactions.

Lastly, it is possible to improve the efficiency of blockchain applications using off-chain peer-to-peer payment channels~\cite{decker2015fast}, in which payment promises guaranteed by on-chain deposits are processed between pairs of nodes. Off-chain promises are only settled on the blockchain periodically, or in case of fraud, which allows high transaction throughput and reduced transaction fees. However such networks are still experimental, do not handle large transactions and suffer from payment hub centralization~\cite{seres2019topological}. This paper does not focus on off-chain overlays, but Blockclique could be used as a high-throughput basis for off-chain payment, offering fast settlement and quick payment channel reconfiguration.

\section{Architecture}
\label{archi}

The Blockclique architecture is a combination of a data structure for the ledger, block and transaction structures, a Sybil-resistant selection of nodes, an incentive model and a consensus rule.
Those elements are described in the following sections.

\subsection{Data Structure}

\begin{figure*}[t]
\centering
\subfloat[Multithreaded block DAG]{\includegraphics[width=0.42\textwidth]{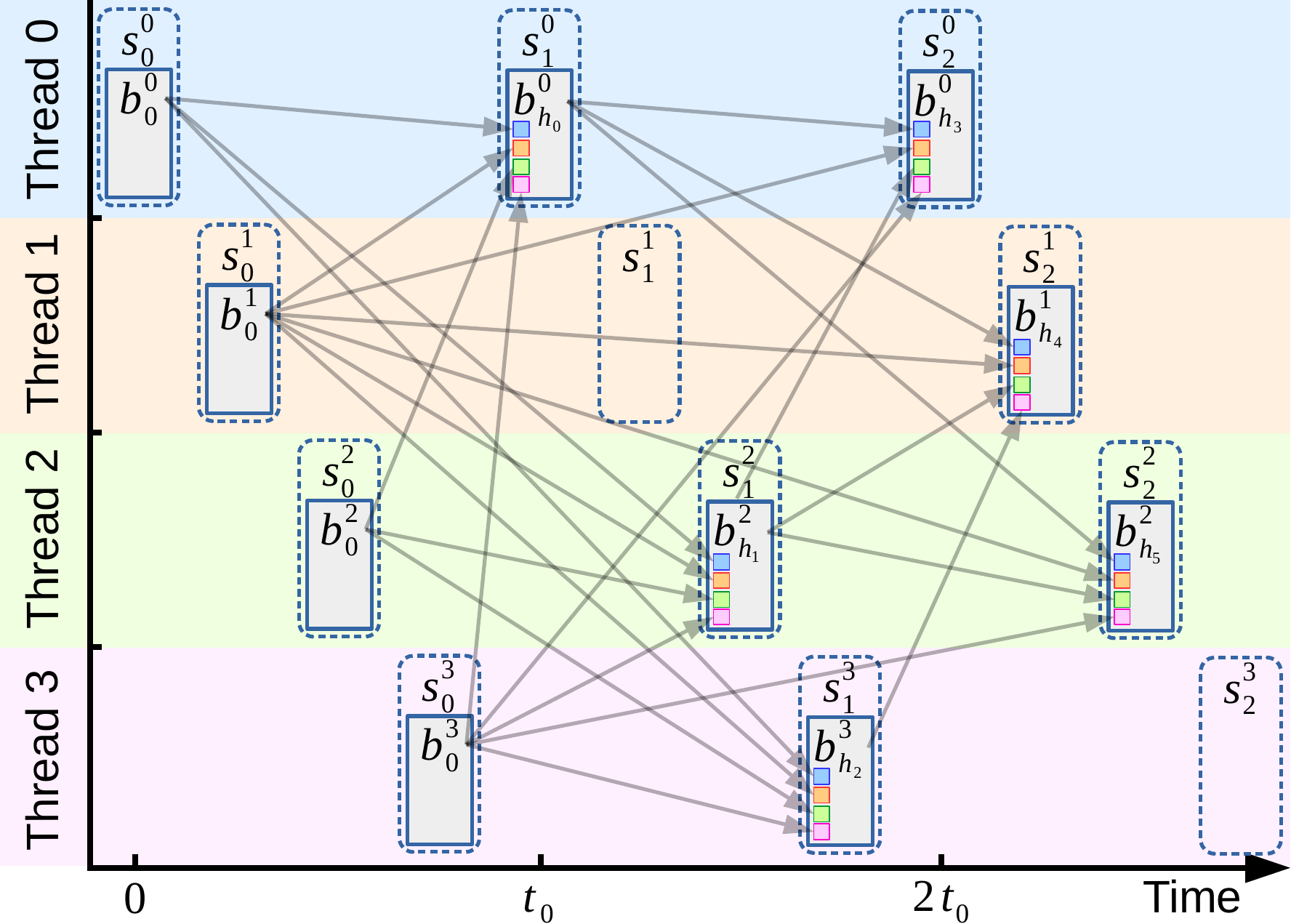}}\hspace{1.5cm}
\subfloat[Non-genesis block structure]{\raisebox{0.42cm}{\quad\includegraphics[width=0.2\textwidth]{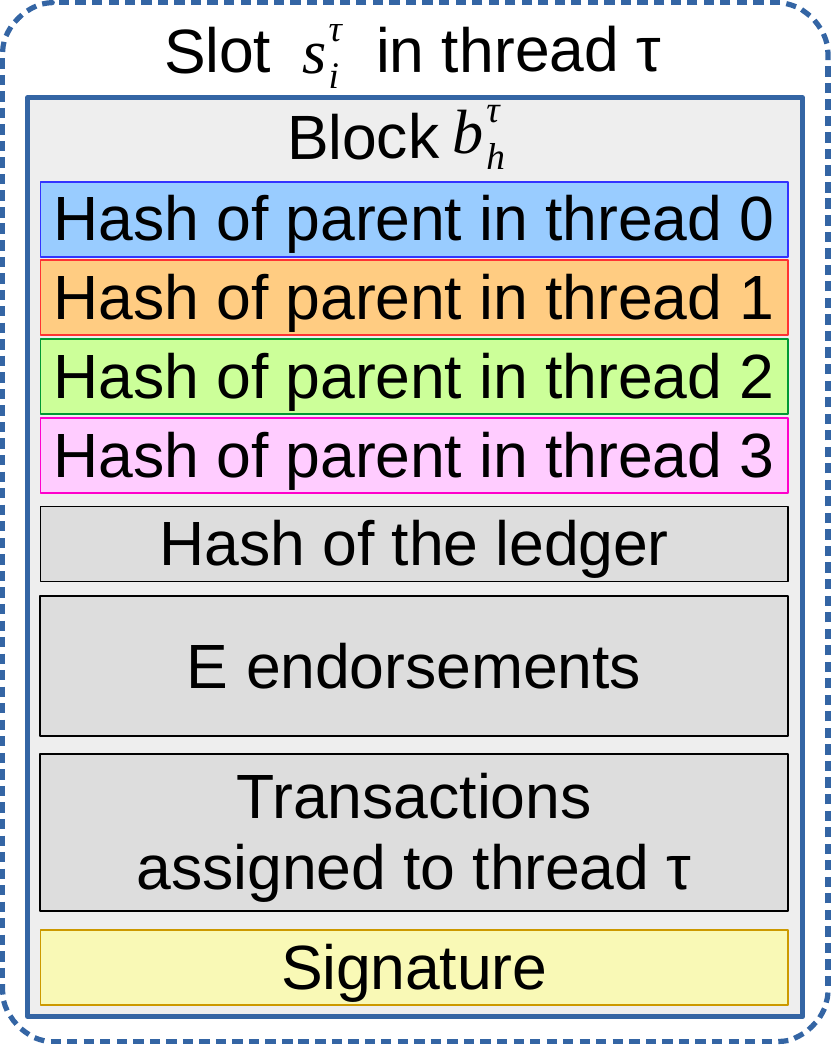}\quad}}
\caption{Data structure. (a) Example timeline of a multithreaded block DAG, with $T=4$ threads and one block slot every $t_0$ seconds in each thread. Dark arrows link parent blocks to their children.
The producer of block $b_{h_5}^2$ did not observe block $b_{h_2}^3$ yet but was still able to create a compatible block that references the earlier $b_{0}^3$ block instead. No block was broadcast for block slot $s^1_1$: it is a case of block miss.
(b) General structure of a non-genesis block $b_h^\tau$ in the block slot $s_i^\tau$ of thread $\tau$.}
\label{example1}
\end{figure*}

\subsubsection{Threads and block slots}

We define $T$ threads numbered from $\tau=0$ to $\tau=T-1$, each containing consecutive regularly spaced (by $t_0$ seconds) time slots that can host blocks. Fig.~\ref{example1}(a) shows an example timeline with $T=4$ threads.
The $i$-th block slot ($i\in \mathbb{Z}^+$) in thread $\tau$ is denoted by $s^\tau_i$ and occurs at $i\cdot t_0 + \tau t_0/T$ seconds.
The $\tau t_0/T$ time shift ensures that globally all block slots, and therefore network usage, are uniformly spread in time.

\subsubsection{Multithreaded Block DAG structure}
\label{multithreaded}

Blocks are identified by their cryptographic hash: $b_{h}^\tau$ refers to the block with hash $h$ in thread $\tau$.
We define one genesis block with no parents, denoted  $b_0^\tau$, in the first block slot of each thread.
Each non-genesis block references the hashes of $T$ parent blocks, one from each thread.
We define the parent function $P(b_{h_2}^{\tau_2}, \tau_1)$ that returns the parent in thread $\tau_1$ of a non-genesis block $b_{h_2}^{\tau_2}$.
The parent function generates a block graph $G$ where an arrow from $b_{h_1}^{\tau_1}$ to $b_{h_2}^{\tau_2}$ means $P(b_{h_2}^{\tau_2}, \tau_1) = b_{h_1}^{\tau_1}$.
As a child block includes its parent's hashes, and the hashing procedure of the child block takes into account those hashes, it is impossible in practice to build a cycle in the graph $G$, unless the security of the hashing function is compromised.
$G$ is therefore a directed acyclic graph of parallel blocks (block DAG).

\begin{definition}
Let \emph{\MBDAG} denote a block DAG structure with all the following properties:
\begin{itemize}
\item one genesis block is present in each of $T$ threads,
\item non-genesis blocks in thread $\tau$ reference one block of each thread as parents, have a size lower than $S_B$ bits, and a block slot number strictly higher than their parent's in thread $\tau$,
\item to ensure the consistency of block references, any ancestor $b_{h_1}^{\tau_1}$ of a block $b_{h_2}^{\tau_2}$ must be $P(b_{h_2}^{\tau_2}, \tau_1)$ or one of its ancestors. 
\end{itemize}
\end{definition}

Blockclique uses an \MBDAG as its data structure (see Fig.~\ref{example1}(a) for an example).

\subsubsection{Ledger}

In a high throughput architecture, the transaction history grows quickly, requiring nodes to be able to forget old blocks to save space.
In Blockclique, nodes store the balance of each address in a local ledger, so that they can verify whether the sender address of a transaction has enough coins, without looking up old transactions.

\begin{definition} 
Let $\mathcal{A}$ be the set of addresses and $\mathcal{B}$ the set of possible balances of an address.
Given a multithreaded block DAG $G$, a ledger state $\mathcal{L}(G,b_{h}^{\tau})$ is a mapping from addresses to their balances after the processing of block $b_{h}^{\tau}$ and its ancestors in $G$: $\mathcal{L}(G,b_{h}^{\tau}): \mathcal{A}\rightarrow \mathcal{B}$.
\end{definition}

\subsubsection{Blocks and Transactions}

The typical structure of a non-genesis block is shown in Fig.~\ref{example1}(b).

Blocks can contain transactions emitted by any node, up to a total block size limit of $S_B$ bits. A transaction represents a modification of the crypto-currency's ledger state, moving coins from one address to another.

Nodes are randomly selected to create blocks in particular block slots.
Furthermore, before each block slot $s^\tau_i$, $E$ randomly selected nodes are allowed to broadcast signed endorsements \cite{tezosdoc} carrying the hash of the last block in thread $\tau$, and those endorsements can be included in any of the $E$ endorsement slots within the block at slot $s^\tau_i$ by its creator.

\subsubsection{Transaction Sharding}

In the Blockclique protocol, transactions are sharded: they are deterministically divided into groups to be processed in parallel threads.
For instance, if there are $T=32$ threads, the first $5$ bits of an address define the thread in which transactions originating from this address can be included.

\begin{definition}
Given the sets of possible addresses $\mathcal{A}$ and transactions $\mathcal{T}$, a \emph{sharding} function $\mathcal{S}$ uniformly assigns any address $a\in\mathcal{A}$ to a particular thread $\mathcal{S}(a)=\tau\in[0,T-1]$, and any transaction $\mathrm{tx}\in\mathcal{T}$ to the thread assigned to the transaction's emitter address. The transaction $\mathrm{tx}$ can only be included in a block of thread $\mathcal{S}(\mathrm{tx})$, and can only reduce the balance of addresses assigned to this thread.
\end{definition}

Transaction sharding ensures that transactions in a block are compatible with transactions in blocks from other threads as they can't spend the same coins. 
We stress that this restriction only applies to spending, and transactions can send coins towards any address, regardless of the thread it is assigned to. Transactions in a thread are regularly taken into account in blocks of other threads through parent links, so that no further cross-shard communication is required.

\subsection{Sybil-Resistant Selection}
\label{sybil}

In a decentralized network, nodes can join and contribute without permission.
To control the rate of execution of transactions, nodes are regularly selected by the protocol to produce blocks of transactions with a limited size.
To prevent malicious actors from spawning an arbitrary large number of nodes (which is called a Sybil attack), and create too many blocks, the selection mechanism needs to rely on a proof of ownership of a resource.
The two main Sybil-resistant selection mechanisms used in current blockchains are Proof-of-Work and Proof-of-Stake.
Proof-of-Work \cite{nakamoto2008bitcoin} selects random nodes depending on their use of computational power, while Proof-of-Stake \cite{goodman2014tezos} selects them based on the amount of coins they hold.

The Blockclique protocol can use any Sybil-resistant selection mechanism that explicitly selects a node to create each block and endorsement. A node must know in advance, but not be able to choose in which threads it should produce its next blocks and endorsements.
For security reasons (see Sec. \ref{secret}), the selection mechanism must take into account the resources of nodes with some time delay $K$, called the resource snapshot delay.

\begin{definition} 
Let $\mathcal{N}$ be the set of nodes in the peer-to-peer network.
A \emph{$K$-Sybil-resistant selection} is a random oracle $\mathcal{S}~:~[0,1]^*~\rightarrow~\mathcal{N}$, accessible to all nodes, with a non-uniform non-stationary output distribution on $\mathcal{N}$. Its distribution in two incompatible cliques must stay the same for at least $K$ seconds after the first incompatible block between those cliques.
\end{definition}

The oracle models a random selection of nodes, which takes as input a string of bytes identifying a particular block slot or endorsement slot, and deterministically selects a node allowed to produce this block or endorsement.
All nodes consult this oracle to check when they are selected for a given slot or endorsement, and to verify that other nodes where allowed to create a given block or endorsement.
If no valid block is produced for a given block slot, the block slot remains empty, which corresponds to a block miss (see slot $s^1_1$ in Fig.~\ref{example1}(a)). Similarly, if no valid endorsement is produced for a given endorsement slot, the endorsement slot remains empty.

A Proof-of-Work mechanism like the one of Bitcoin is not directly adaptable to the Blockclique architecture.
Indeed, nodes could decide in which thread they produce blocks, or that thread could be chosen from the block hash as in the (non-sharded) OHIE protocol \cite{yu2018ohie}, in which case nodes do not know in advance which (sharded) transactions to include.
However, a Proof-of-Work mechanism could be used to generate identities, as in ELASTICO \cite{luu2016secure}, and those identities could then be randomly selected to produce blocks and endorsements.

Proof-of-Stake mechanisms like Tezos \cite{tezosdoc} are readily transferable to the Blockclique protocol.
Nodes that register to be stakers are randomly selected to produce blocks, with a probability proportional to their balance.
In the Tezos protocol, the stake snapshot delay $K$ is set to approximately $3$ weeks.
Moreover, a seed is computed in each cycle from bytes included in blocks by their producers, and is used to select pseudo-randomly the producers of a later cycle.  
Various implementations of random generator seeding can be considered \cite{pvss,boneh2018survey}.



In Blockclique, each block has a scalar fitness value, which measures the fraction of resources required for the creation of the block through Sybil-resistant selection.
The fitness $f(b)$ of a block $b$ is defined as the total number of selected addresses that successfully participated in the creation of the block:
\begin{equation}
    f\left(b\right) = 1 + e
    \label{eq:fitness}
\end{equation}
The scalar $1$ acknowledges successful block creation and inclusion, and $e$ is the number of endorsements successfully produced and included among the $E$ endorsement slots of block $b$.
This fitness value is used by the consensus rule to determine the set of executed transactions (see Sec. \ref{rule}).

\begin{figure*}[t]
\centering
\subfloat[Thread incompatibility]{\includegraphics[scale=0.6]{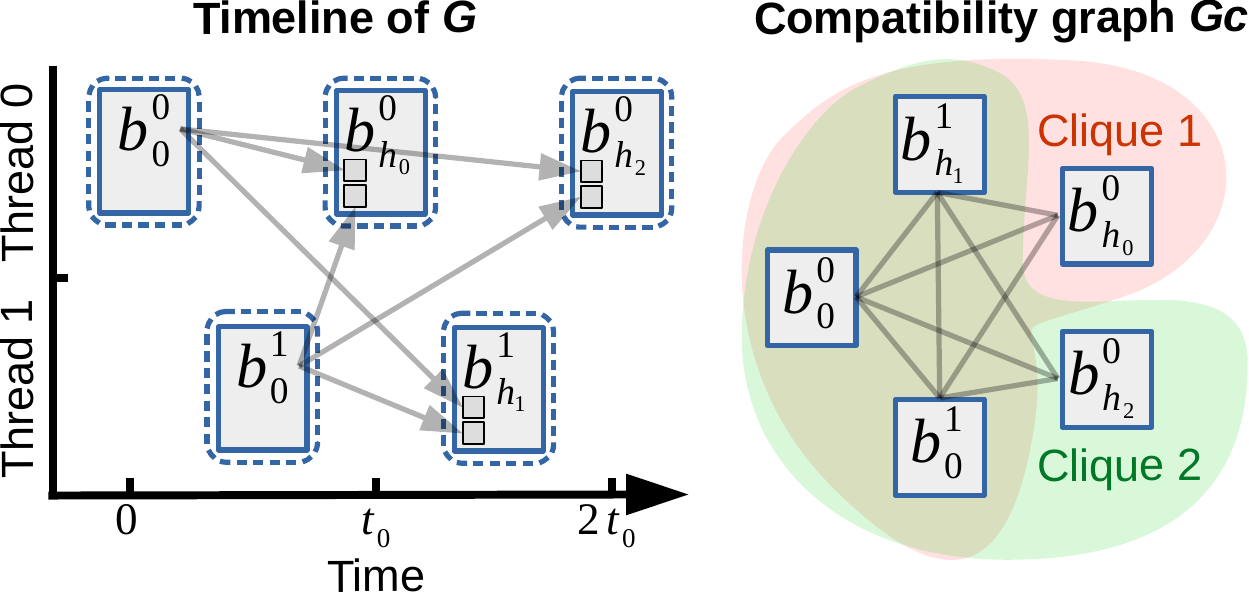}}
\hspace{40pt}
\subfloat[Grandpa incompatibility]{\includegraphics[scale=0.6]{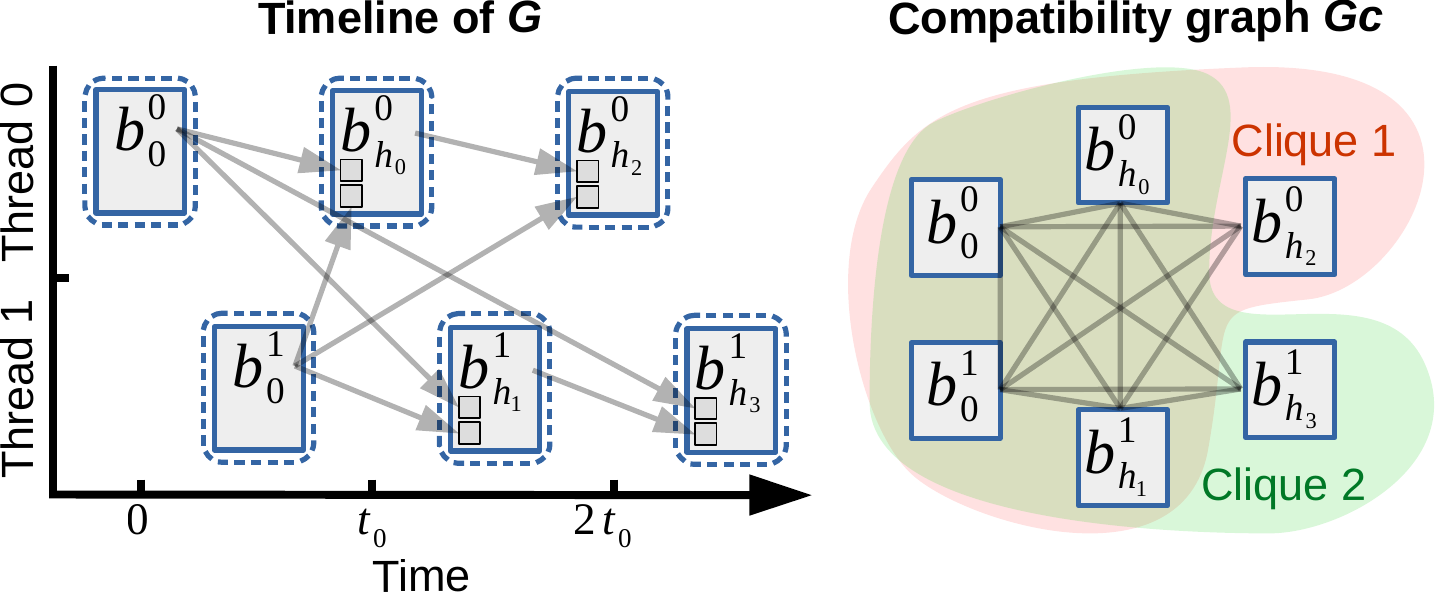}}
\caption{From Multithreaded block DAG to compatibility graph. (a) Blocks $b_{h_0}^0$ and $b_{h_2}^0$ from thread $0$ reference the same parent in thread $0$: they are thread-incompatible. (b) Block $b_{h_2}^0$ references the grand-parent of block $b_{h_3}^1$ in thread $1$ and block $b_{h_3}^1$ references the grand-parent of block $b_{h_2}^0$ in thread $0$: the two blocks are grandpa-incompatible. In both cases there are two maximal cliques of compatible blocks (red and green filled areas).}
\label{example_rule}
\end{figure*}

\subsection{Incentives: Rewards and Penalties}
\label{incentives}

In order to motivate nodes to participate in the consensus with the behavior specified by the protocol, an incentive model provides rewards for appropriate behavior and penalties for deviations from the protocol.

\begin{definition}
Given the set of addresses $\mathcal{A}$, a multithreaded block DAG $G$ and a block $b_{h}^{\tau}$, an \emph{Incentive model} $\mathcal{I}(G,b_{h}^{\tau})$ is a function assigning a reward and penalty to the addresses of all nodes: $\mathcal{I}(G,b_{h}^{\tau}): \mathcal{A} \rightarrow \mathbb R_{\ge 0}^2$.
\end{definition}

We assume that the creation of blocks is rewarded by newly created coins.
The block reward also contains a constant amount per included endorsement, shared between the block producer, the endorser, and the producer of the endorsed block, which motivates block creation as well as endorsement creation and inclusion.
The inclusion of transactions in blocks is rewarded by the fees from all included transactions. 

In order to prevent block producers from creating or endorsing multiple incompatible blocks in the same slot, we assume that the incentive model gives penalties to addresses involved in this misbehavior.
For instance, in Tezos those penalties are implemented by requiring block and endorsement producers to deposit an amount of coins that they can't withdraw for a given time~\cite{tezosdoc}.
Any node can produce a denunciation transaction containing the proof that a same address has produced or endorsed incompatible blocks at the same block slot.
A denunciation included in a block causes a coin penalty to the offender, taken from its deposit, half of which is destroyed, and half of which is transferred to the block creator.

In the context of the Blockclique architecture, transaction sharding requires that the offender address belongs to the same thread as the block in which the denunciation is included, because the offender is ``spending" the penalty.


\subsection{Consensus Rule}
\label{rule}

When a node receives a block from its peers, it checks that the block is valid, and uses a consensus rule to decide which valid blocks should be taken into account.
The intuitions behind the blockclique consensus rule are the following: on the one hand, each thread behaves like a standard blockchain so that two blocks in the same thread can't share the same parent in that thread (thread incompatibility), and on the other hand, rather than acting as if threads were independent, nodes should take into account blocks found in other threads (grandpa incompatibility).

\subsubsection{Compatibility Graph}

Let the predicate $\texttt{Path}_\tau(G, b_{h_1}^{\tau}, b_{h_2}^{\tau})$ be true if there is a directed path in the multithreaded block graph $G$ going from $b_{h_1}^{\tau}$ to $b_{h_2}^{\tau}$ through blocks of thread $\tau$ only, or if $b_{h_1}^{\tau} = b_{h_2}^{\tau}$. This predicate indicates whether or not $b_{h_1}^{\tau}$ is an ancestor in $\tau$ of (or is equal to) $b_{h_2}^{\tau}$.

We define the thread incompatibility graph $G_{TI}$ as the graph with one node per valid block, and an undirected edge between two blocks $b_{h_1}^{\tau_1}$ and $b_{h_2}^{\tau_2}$ only if the two blocks are non-genesis blocks in the same thread and have the same parent in their thread: 
\begin{multline}
G_{TI}(b_{h_1}^{\tau_1}, b_{h_2}^{\tau_2}) := \big[b_{h_1}^{\tau_1} \neq b_0^{\tau_1}\big] ~\texttt{and}~ \big[b_{h_2}^{\tau_2} \neq b_0^{\tau_2}\big] ~\texttt{and}~  \\ 
\hspace{1.5cm} \big[\tau_1 = \tau_2\big] ~\texttt{and}~ \big[P(b_{h_1}^{\tau_1}, \tau_1) = P(b_{h_2}^{\tau_2}, \tau_2)\big] \hspace{0.5cm}
\end{multline}
Fig.~\ref{example_rule}(a) shows an example of a thread incompatibility between two blocks.

We define the grandpa incompatibility graph $G_{GPI}$ as the graph with one node per valid block, and an undirected edge between two blocks $b_{h_1}^{\tau_1}$ and $b_{h_2}^{\tau_2}$ if 
the parent of block $b_{h_2}^{\tau_2}$ in thread $\tau_1$ is not the parent of $b_{h_1}^{\tau_1}$ nor one of its descendants in $\tau_1$, and the parent of block $b_{h_1}^{\tau_1}$ in thread $\tau_2$ is not the parent of $b_{h_2}^{\tau_2}$ nor one of its descendants in $\tau_2$:
\begin{multline}
\label{eq:gpi}
G_{GPI}(b_{h_1}^{\tau_1}, b_{h_2}^{\tau_2}) := \big[b_{h_1}^{\tau_1} \neq b_0^{\tau_1}\big] ~\texttt{and}~ \big[b_{h_2}^{\tau_2} \neq b_0^{\tau_2}\big] \\
\hspace{1cm} \texttt{and}~ \big[\texttt{not Path}_{\tau_1}(G, P(b_{h_1}^{\tau_1}, \tau_1), P(b_{h_2}^{\tau_2}, \tau_1))\big] \hspace{1cm} \\
\texttt{and}~ \big[\texttt{not Path}_{\tau_2}(G, P(b_{h_2}^{\tau_2}, \tau_2), P(b_{h_1}^{\tau_1}, \tau_2))\big]~~
\end{multline}
Grandpa incompatibility is a topological way of expressing that a block $b_h^\tau$ in thread $\tau$ should not be included if it does not take into account blocks that were found in other threads before the time when $P(b_h^\tau, \tau)$ was found, but without checking block timestamps that can be inaccurate or manipulated.
Fig.~\ref{example_rule}(c) shows an example of grandpa incompatibility between two blocks.

Using thread and grandpa incompatibility graphs $G_{TI}$ and $G_{GPI}$, we define the compatibility graph $G_C$ as the graph with one node per valid block, and an undirected edge between two blocks $b_{h_1}^{\tau_1}$ and $b_{h_2}^{\tau_2}$ if the two blocks are not thread nor grandpa incompatible, and $b_{h_1}^{\tau_1}$ is compatible with the parents of $b_{h_2}^{\tau_2}$, and $b_{h_2}^{\tau_2}$ is compatible with the parents of $b_{h_1}^{\tau_1}$:
\begin{multline}
\hspace{-0.32cm} G_C(b_{h_1}^{\tau_1}, b_{h_2}^{\tau_2}) := \\
\hspace{-1.5cm} \big[\texttt{not}~ G_{TI}(b_{h_1}^{\tau_1}, b_{h_2}^{\tau_2})\big] ~\texttt{and}~ \big[\texttt{not}~ G_{GPI}(b_{h_1}^{\tau_1}, b_{h_2}^{\tau_2})\big] \\ 
\hspace{-0.3cm} \texttt{and}~ \Big[\big[b_{h_2}^{\tau_2} = b_0^{\tau_2}\big] ~\texttt{or}~\big[G_C(b_{h_1}^{\tau_1}, P(b_{h_2}^{\tau_2}, \tau)) ~\texttt{for all}~ \tau\big]\Big] \\
 \texttt{and}~ \Big[\big[b_{h_1}^{\tau_1} = b_0^{\tau_1}\big] ~\texttt{or}~\big[G_C(P(b_{h_1}^{\tau_1}, \tau), b_{h_2}^{\tau_2}) ~\texttt{for all}~ \tau\big]\Big] 
\end{multline}
$G_C$ therefore links mutually compatible blocks, and blocks that reference mutually incompatible parents are discarded.
Figs.~\ref{example_rule}(b, d) show the $G_C$ graphs corresponding to the incompatibilities illustrated in Figs.~\ref{example_rule}(a, c).

The definition of $G_C$ is recursive: $G_C$ is built incrementally following a topological order of $G$, by processing a block as soon as all its parents have been received and processed.

\subsubsection{Best Clique of Compatible Blocks}
Let \texttt{cliques}$(G_C)$ be the set of maximal cliques of compatible blocks: the set of subsets $C$ of $G_C$ so that every two distinct blocks of $C$ are adjacent in $G_C$ and the addition of any other block from $G_C$ to $C$ breaks this property.
In the remainder of the paper, the term ``clique" refers to a maximal clique of compatible blocks.

The blockclique consensus rule states that the best clique, that nodes should extend, is called the \emph{blockclique} and is the clique of compatible blocks of maximum total block fitness:
\begin{multline}
\hspace{-0.2cm}\texttt{blockclique}(G) := \underset{C~\in~\texttt{cliques}(G_C)}{\mathrm{arg\,max}} ~ \bigg[\,\,\sum_{b \in C}f(b)\,\,\bigg]~~
\end{multline}
If two cliques have the same total fitness, the clique with the smallest arbitrary-precision sum of the hashes of the blocks it contains is preferred.

\subsubsection{Incremental Compatibility Graph and Finality}
\label{incremental}

As finding the maximal cliques of a graph is NP-hard \cite{karp1972reducibility}, the blockclique of the whole compatibility graph $G_C$ cannot be efficiently computed once $G_C$ contains more than a few hundred blocks.
Thus, an incremental rule for recomputing cliques using only the most recent blocks is required.
We define $G_C^\mathrm{head}$ as a minimal version of $G_C$ from which blocks that 
are considered final (forever part of the blockclique) or stale (forever incompatible with the blockclique) have been removed. 
$G_C^\mathrm{head}$ is kept in memory and updated incrementally.

A block $b_{h}^{\tau}$ is considered stale if it is included only in cliques of $G_C^\mathrm{head}$ that have a total fitness lower than the fitness of the blockclique minus a constant $\Delta_f^0$. 
Any new block referring to stale parents is considered stale. 
A block $b_{h}^{\tau}$ is considered final if it is included in all maximal cliques of $G_C^\mathrm{head}$ and included in at least one clique where the descendants of $b_{h}^{\tau}$ cumulate a total fitness greater than $\Delta_f^0$.

We define the threshold fitness difference $\Delta_f^0 = F(E+1)$, where $E$ the number of endorsement slots per block, and $F$ is a finality parameter that can be seen as the number of blocks by which an alternative clique can be shorter than the blockclique before its blocks may be discarded as stale. 

\begin{definition} 
Given a current incremental compatibility graph $G_C^\mathrm{head}$ and a new block $b_{h}^{\tau}$, a \emph{$(F,E)$-Nakamoto} consensus rule outputs a set of final and stale blocks to be removed from $G_C^\mathrm{head}$, and the blockclique to be considered.
\end{definition}

\section{Security}\label{sec:security}


\subsection{Threat Model}

We consider a network composed of honest and Byzantine nodes.
Honest nodes follow the Blockclique protocol, while Byzantine (``attacker") nodes seek to disturb its functioning for their benefit or even at their own cost. Byzantine nodes hold a proportion $\beta$ of the total resource, and honest nodes own $1-\beta$.
Furthermore, we generalize the behavior of honest nodes by assuming that they may not be perfect: they miss block creation and endorsement opportunities with a probability $\mu$.
We define $\gamma = (1-\beta)(1-\mu)$ as the proportion of the total resource that is in active use by honest nodes.
Attackers are assumed to be able to delay the propagation of messages between honest nodes by a maximum time of $\delta$ seconds, so that a block or endorsement created by a honest node is broadcast to all other honest nodes before the delay $\delta$.

\subsection{Attack Surface}
\label{surface}

Attackers are fully coordinated and always behave in the optimal way to perform a given attack. 
They can choose to honor or miss block creation and endorsement opportunities in the blockclique and/or any alternative cliques meant to attack the blockclique. 
When creating a block, they choose which transactions and endorsements to include (if any).

In Blockclique, as in blockchains with a Nakamoto consensus, consensus emerges through block creation.
In Bitcoin, blocks are never definitely confirmed and the confirmation status of a block increases with the number of blocks that are appended to it.
In Blockclique however, the status of a block is eventually settled: it either becomes part of the history (final) or is discarded (stale). Transactions in final blocks are considered as perpetually executed by honest actors, while transactions only within a stale block are discarded. 
The boundary between settled and unsettled blocks is controlled by the finality parameter $F$.

In blockchains, attackers can try to re-organize the blockchain by extending an alternative branch of the block tree, for instance to attempt a double-spend.
In Blockclique, the introduction of the finality parameter modifies the mechanisms and the consequences of such attacks.
Attackers can branch off the current blockclique and extend an alternative clique from three possible levels.

Attackers can branch off a recent block that is universally seen as unsettled.
If they succeed in extending the alternative clique and overtaking the fitness of the blockclique, honest nodes switch to the alternative clique.
However, this does not change the finality status of any block according to any honest node, and therefore has no consequences on the finality of transactions.

Alternatively, attackers can branch off a block that is settled according to some but not all honest nodes due to network delays. If they succeed in overtaking the blockclique, this can lead to a network fork.
We call this a \textit{finality fork} attack, and study it in Sec.~\ref{finality_fork}.

Finally, attackers can also branch off an old block, known by all honest nodes under the same status, final or stale.
If they succeed in overtaking the blockclique, honest nodes are not affected as they consider all the descendants of the block as stale.
However, new nodes joining the network and simply choosing the clique of highest fitness are vulnerable.
We study this attack on newcomers in Sec.~\ref{secret}.

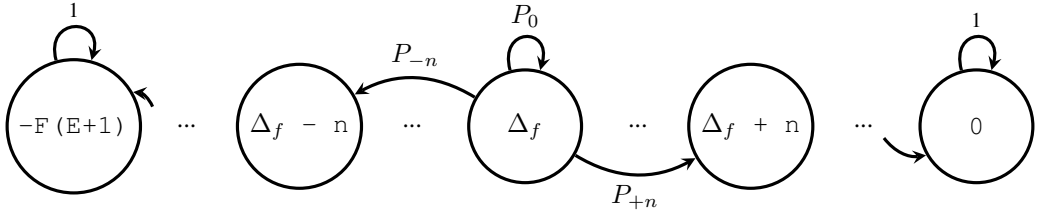
\begin{figure*}[h]
\centering
\begin{tikzpicture}[scale=1, transform shape, >=stealth, node distance=1.5cm, very thick]
	\node[draw, shape=circle,minimum size=1.5cm] (0) {\texttt{-F(E+1)}}; 
	\node[shape=circle,minimum size=1cm, right of=0] (1) {...};  
	\node[draw, shape=circle,minimum size=1.5cm, right of=1] (2) {$\Delta_f$\texttt{ - n}}; 
	\node[xshift=1.5cm, at=(2)] (3) {...};  
	\node[draw, shape=circle,minimum size=1.5cm, right of=3] (4) {$\Delta_f$}; 
	\node[xshift=1.5cm, at=(4)] (5) {...};  
	\node[draw, shape=circle,minimum size=1.5cm, right of=5] (6) {$\Delta_f$\texttt{ + n}}; 
	\node[xshift=1.5cm, at=(6)] (7) {...};  
	\node[draw, shape=circle,minimum size=1.5cm, right of=7] (8) {\texttt{0}};
	\path (0) edge [->, out=105, in=75, min distance=0.6cm] node[above] {\footnotesize 1} (0)
	(8) edge [->, out=105, in=75, min distance=0.6cm] node[above] {\footnotesize 1} (8) 
	(4) edge [->, out=105, in=75, min distance=0.6cm] node[above] {$P_0$} (4) 
	(4) edge [->, bend right] node[below] {$P_{+n}$} (6) 
	(4) edge [->, bend right] node[above] {$P_{-n}$} (2) 
	(7) edge [->, bend right] node[above] {} (8) 
	(1) edge [->, bend right] node[above] {} (0) 
	;
\end{tikzpicture} 
\caption{Markov chain representing the fitness difference $\Delta_f$ between an alternative attack clique and the blockclique.}
\label{pos_markov}
\end{figure*}

\subsection{A Markov Chain Model of Alternative Cliques}
\label{mcmodel}

In this section, we model the short-term evolution of the fitness of alternative cliques.
We consider a honest reference node receiving blocks (not necessarily in their order of creation), and we model the fitness difference $\Delta_f$ between the clique extended by an attacker and the blockclique extended by honest nodes through a Markov chain of state $\Delta_f$, as observed and processed by the reference node.
Fig. \ref{pos_markov} shows the Markov chain model representing the fitness difference $\Delta_f$.

We assume that the attack occurs within the resource snapshot delay $K$, so that resources and random selection results are the same in both cliques. 

We also assume that the attacker arbitrarily delays the transmission of messages (blocks and endorsements) up to a time delay $\delta < \frac{t_0}{2}$.
With this assumption, honest nodes always receive a created block $b_h^\tau$ before a delay $\frac{t_0}{2}$, then create and broadcast endorsements of $b_h^\tau$, so that the next block producer in thread $\tau$ receives the block and its endorsements before the time $t_0$ when it is supposed to create and broadcast the next block.
Honest nodes thus never create incompatible blocks.
Also, in case the Sybil-resistant mechanism modifies the selection probabilities during the attack, we conservatively consider $\beta$ to be the maximum proportion of resources the attacker reaches during the attack.

The model considers the worst-case scenario in which the attacker never misses block creation nor endorsement opportunities in the attack clique and always misses in the blockclique, while honest nodes miss in the blockclique with a probability $\mu$ and always miss in the attack clique.

If the fitness difference $\Delta_f$ reaches $-F(E+1)$, where $F$ is the finality parameter and $E$ the number of endorsement slots per block, the attack fails.
On the contrary, if $\Delta_f$ reaches $0$, the attack clique overtakes the blockclique and the attack succeeds.
The states $-F(E+1)$ and $0$ are therefore the two absorbing states of the Markov chain, which constrains states within $-F(E+1) \leq \Delta_f \leq 0$.

When the honest node receives a block created by the attacker, $\Delta_f$ transitions forward to $\Delta_f + n$, with $1 \leq n \leq E+1$ depending on the number of endorsements the attacker was selected to create for the previous block. If $\Delta_f + n \geq 0$, the Markov chain enters and remains in the attack success state $\Delta_f=0$.
The probability $P_{+n}$ of such a $n$-point forward jump is the probability that the attacker is selected for the creation of one block and for $n-1$ endorsements in that block out of $E$ slots (independent draws of a binomial law):
\begin{equation}
    P_{+n} = \beta~\binom{E}{n-1} \beta^{n-1} (1-\beta)^{E-(n-1)} 
\end{equation}

Similarly, when the honest node receives a block created by a honest node, $\Delta_f$ transitions backwards to $\Delta_f - n$ with $1 \leq n \leq E+1$. If $\Delta_f - n \leq -F(E+1)$, the Markov chain enters and remains in the attack failure state $\Delta_f = -F(E+1)$.
The probability $P_{-n}$ of a $n$-point backward jump is the probability that a honest node is selected for the creation of one block and does not miss it and that despite endorsement misses, honest nodes produce exactly $n-1$ endorsements out of $E$:
\begin{equation}
    P_{-n} = \gamma~\binom{E}{n-1} \gamma^{n-1} (1-\gamma)^{E-(n-1)}~~~
\end{equation}

If a selected honest node misses block creation, the state of the Markov chain does not change, which happens with probability $P_0 = (1-\beta)\,\mu$.

From $P_{+n}$, $P_{-n}$ and $P_{0}$, we deduce the matrix of transition probabilities from any state to any other.
Standard techniques for absorbing Markov chains \cite{grinstead1997chapter} provide ways to numerically compute the probability that the attack clique eventually overtakes the blockclique depending on the initial state. 

\begin{figure*}[t]
\centering
\subfloat[$E=0$, $\mu=1\%$]{\includegraphics[width=0.82\columnwidth]{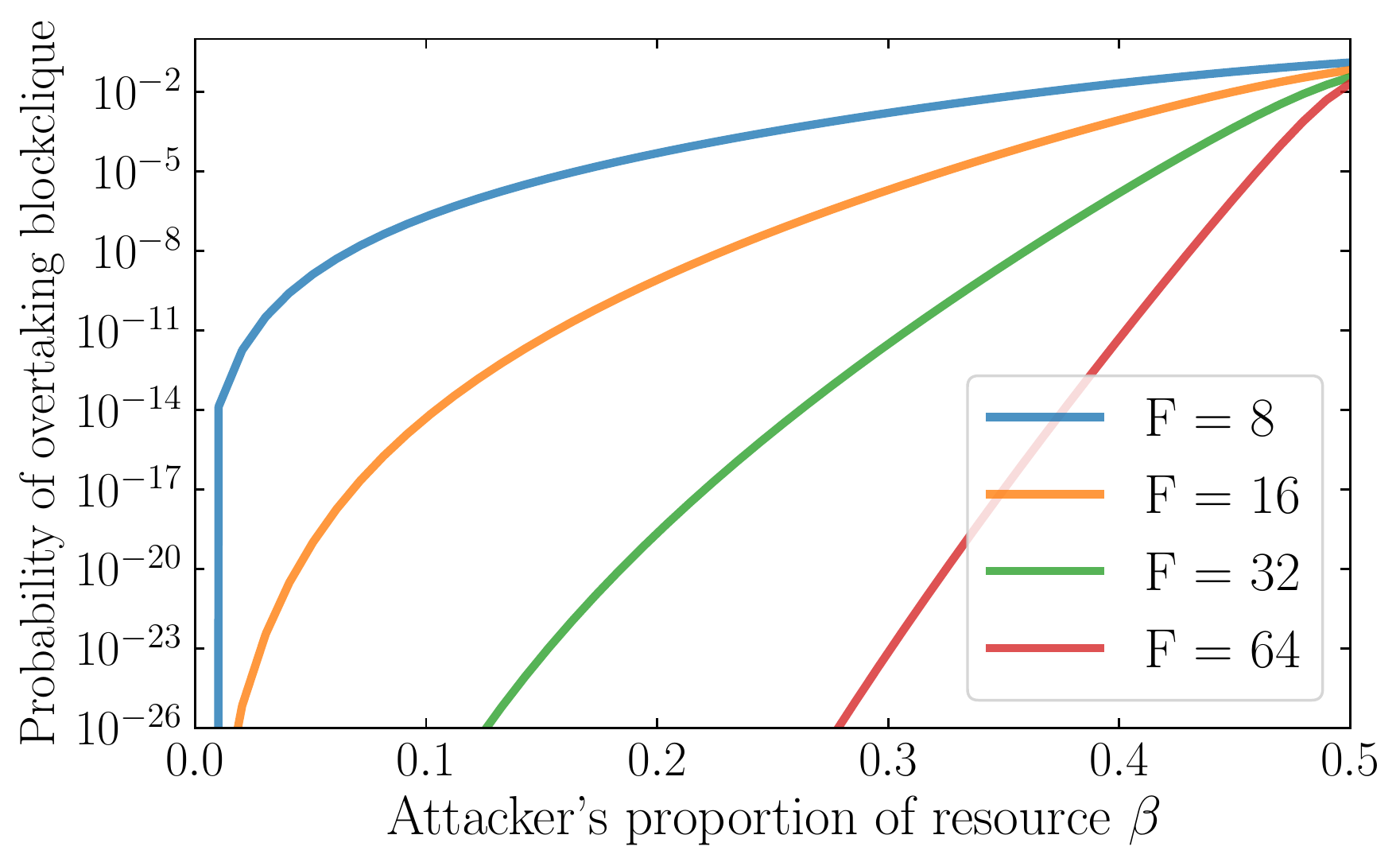}}\hspace{1.5cm}
\subfloat[$F=64$, $\mu=1\%$]{\includegraphics[width=0.82\columnwidth]{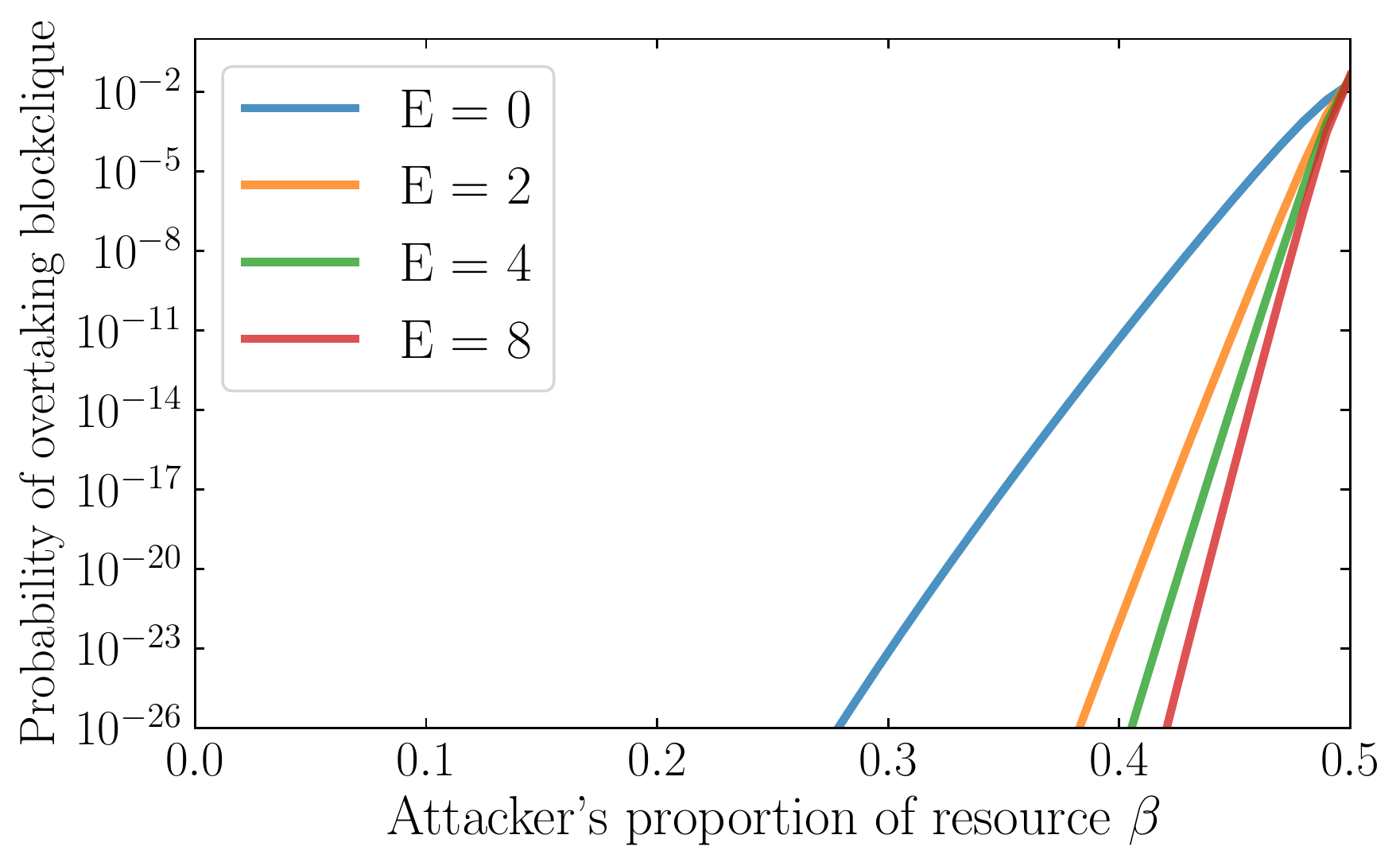}}
\caption{Probabilities of finality fork attack success as functions of the proportion of staking power $\beta$ held by attackers, depending on $F$ and $E$.}
\label{markov_res_a}
\end{figure*}

\subsection{Attacking Block Finality Consensus}
\label{finality_fork}

Decentralized currencies require strong guarantees that transactions labeled as \emph{final} are perpetual and may not be canceled in the future.
However, due to network propagation delays, consensus on whether a block is settled is not reached instantly. For a short time, some nodes may have settled a block as stale or final, while others haven't yet.
Powerful attackers can coordinate to timely extend an alternative clique from a block incompatible with this one. If the fitness of the attack clique overcomes the current blockclique, part of the honest nodes switch to the attack clique as their new blockclique while others only keep the initial blockclique and reject the attack clique as stale, resulting in a permanent network fork.

We consider two cliques: the current blockclique being attacked, and an alternative attack clique incompatible with the current blockclique being extended by the attackers.
In the current blockclique, honest nodes produce blocks and endorsements with a miss probability $\mu$, while attackers miss all block creations and endorsements to slow down the increase of the fitness of the current blockclique. In the attack clique, honest nodes are not present and miss all their block creation and endorsement opportunities, while attackers never miss block creation and endorsements to maximize the fitness increase of the attack clique.

In order to evaluate the success probability of a finality fork attack, we use the Markov chain of Sec.~\ref{mcmodel} modeling the evolution of the fitness difference $\Delta_f$ between the attack clique and the current blockclique at every new slot. 
The attack starts when the attack clique is on the verge of reaching the finality threshold ($\Delta_f = -(F-1)(E+1)$), fails if it does (reaching absorbing state $\Delta_f = -F(E+1)$), and succeeds if its fitness catches up with the current blockclique (reaches absorbing state $\Delta_f = 0$). The following Lemma shows that in the case $E = 0$ the success probability of a finality fork attack drops exponentially in $F$.



\begin{lemma}[Success of a finality fork attack]
\label{lemma_finality}
Assuming $\beta < \gamma$, $\delta < \frac{t_0}{2}$ and $E=0$, the probability $p$ of a finality fork attack success is $$p = \frac{\frac{\gamma}{\beta} -1}{\left( \frac{\gamma}{\beta} \right)^F - 1} \underset{F \to \infty}{\sim} \left(\frac{\gamma}{\beta} - 1\right)\left( \frac{\beta}{\gamma} \right)^F$$
\end{lemma}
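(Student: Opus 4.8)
The plan is to specialize the Markov chain of Section~\ref{mcmodel} to the case $E = 0$ and recognize it as a gambler's ruin problem. When $E = 0$ there are no endorsement slots, so every block has fitness $f(b) = 1$ by~\eqref{eq:fitness}, and each transition therefore moves $\Delta_f$ by exactly one unit: the only nonzero jump probabilities are $P_{+1} = \beta$, $P_{-1} = \gamma$, together with the self-loop $P_0 = (1-\beta)\mu$. Using $\gamma = (1-\beta)(1-\mu)$ one checks that $\beta + \gamma + (1-\beta)\mu = 1$, and in particular that $\beta + \gamma = 1 - P_0$. The chain is thus a lazy nearest-neighbor random walk on the integer states $\{-F, -F+1, \dots, -1, 0\}$ with absorbing barriers at $\Delta_f = 0$ (success) and $\Delta_f = -F$ (failure), started from $\Delta_f = -(F-1)$.

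First I would let $p_k$ denote the probability of eventual absorption at $0$ starting from state $\Delta_f = k$, and write the first-step (harmonic) equation
\begin{equation}
p_k = \beta\, p_{k+1} + \gamma\, p_{k-1} + (1-\beta)\mu\, p_k, \qquad -F < k < 0,
\end{equation}
with boundary conditions $p_0 = 1$ and $p_{-F} = 0$. Moving the self-loop term to the left and using $\beta + \gamma = 1 - (1-\beta)\mu$ collapses this into the homogeneous recurrence $(\beta+\gamma)\,p_k = \beta\,p_{k+1} + \gamma\,p_{k-1}$, so the laziness $P_0$ cancels and does not affect absorption probabilities, as expected for a gambler's ruin.

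Next I would solve the recurrence. Its characteristic polynomial $\beta x^2 - (\beta+\gamma)x + \gamma$ factors as $(\beta x - \gamma)(x-1)$, giving roots $x = 1$ and $x = \gamma/\beta$; since $\beta < \gamma$ these are distinct, so the general solution is $p_k = A + B\,(\gamma/\beta)^k$. Imposing $p_0 = 1$ and $p_{-F} = 0$ fixes $A$ and $B$, and evaluating at the start state $k = -(F-1)$ and clearing powers of $\gamma/\beta$ yields the claimed closed form $p = \frac{\gamma/\beta - 1}{(\gamma/\beta)^F - 1}$. Finally, since $\beta < \gamma$ gives $\gamma/\beta > 1$, the denominator satisfies $(\gamma/\beta)^F - 1 \sim (\gamma/\beta)^F$ as $F \to \infty$, which immediately produces the stated exponential asymptotic $(\gamma/\beta - 1)(\beta/\gamma)^F$.

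The computations here are entirely routine, so I do not expect a genuine obstacle; the only care points are bookkeeping ones. The main thing to get right is the correct identification of the absorbing barriers and of the start state $\Delta_f = -(F-1)$ rather than $-F$, since an off-by-one there changes the exponent of the result; and the observation that the self-loop term must cancel, which is what lets the miss probability $\mu$ enter only through the effective honest rate $\gamma$ and not separately into the absorption probabilities. The hypotheses $\beta < \gamma$ and $\delta < \tfrac{t_0}{2}$ are used only to guarantee, respectively, that $\gamma/\beta > 1$ (so the formula is well-defined and decays in $F$) and that the single-step transition structure assumed in Section~\ref{mcmodel} is valid.
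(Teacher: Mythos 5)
Your proposal is correct and follows essentially the same route as the paper: both identify the $E=0$ case as a two-barrier gambler's-ruin problem on the states $\{-F,\dots,0\}$ with start state $\Delta_f=-(F-1)$, and both arrive at the same closed form and asymptotic. The only difference is one of self-containedness: the paper cites a standard result for random walks with drift and plugs in $\Delta_f=-F+1$, whereas you derive that formula from the first-step equations---including the useful explicit check that the self-loop $P_0=(1-\beta)\mu$ cancels and so the miss rate enters only through $\gamma$.
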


\begin{proof}

The finality fork attack is successful if the attack clique, starting with a fitness difference $\Delta_f = -F+1$, reaches a fitness difference of $0$. This problem is analogous to the two barrier ruin problem for Bernoulli random walks. A standard result for random walks with drift (see \cite{soren2010ruin}, p. 297) shows that, starting from state $\Delta_f$, the probability of reaching the absorbing state $\Delta_f = 0$ is given by
\begin{equation}
p(\Delta_f) = 1 - \frac{\left( \frac{\gamma}{\beta} \right)^F - \left( \frac{\gamma}{\beta} \right)^{F- \Delta_f}}{\left( \frac{\gamma}{\beta} \right)^F - 1}
\end{equation}
A finality fork attack starts at $\Delta_f = -F + 1$, in which case
\begin{equation}
p(-F + 1) = 1 - \frac{\left( \frac{\gamma}{\beta} \right)^F - \frac{\gamma}{\beta}}{\left( \frac{\gamma}{\beta} \right)^F - 1} = \frac{\frac{\gamma}{\beta} -1}{\left( \frac{\gamma}{\beta} \right)^F - 1}
\end{equation}
\end{proof}

The case $E > 0$ is more involved to analyze formally but a numerical computation (see Sec. \ref{mcmodel}) shows that increasing $E$ improves the security of the protocol against the finality fork attack. Figure \ref{markov_res_a} shows example values of the attack success probability computed for different $\beta$, $F$ and $E$.
For instance, with $F = 64$, $E = 0$, $\mu = 1\%$ and $\beta = 45\%$, the success probability of a finality fork attack is about $10^{-6}$, while with $E=8$, it becomes about $10^{-16}$.

For the Markov chain hypotheses to hold, the resource snapshot delay $K$ must be longer than the possible duration of an attack. The following Lemma show that the probability that the duration of an attack last more than $n$ slots decrease exponentially with $n$. It follows that it is possible to chose $K$ such that the Markov chain hypotheses hold except with probability as small as desired. The attack duration and its standard deviation increase with $\beta$. Numerical results show that for $\beta = 0.5$, $F = 64$ and $E = 8$ attacks reach an average duration of $410\,$ slots (s.d. of $\sigma=598$ slots).

\begin{lemma}[Duration of a finality fork attack]
\label{lemma_duration_finality}
The probability that a finality fork attack lasts more than $n$ slots decreases exponentially with $n$.
\end{lemma}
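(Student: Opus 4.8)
The plan is to recognize the attack duration as the absorption time of the finite-state absorbing Markov chain of Section~\ref{mcmodel}, and then to exploit the elementary fact that absorption times of finite absorbing chains always have geometrically decaying tails. First I would fix notation: the chain lives on the finite state space $\{-F(E+1), \dots, 0\}$, whose two extreme states are absorbing (failure and success) and whose $F(E+1)-1$ interior states are transient. One slot of the attack corresponds to exactly one transition of the chain, so the attack duration (in slots) is precisely the number of steps before the chain first reaches $\{-F(E+1), 0\}$, i.e. the absorption time started from the initial state $\Delta_f = -(F-1)(E+1)$.

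The heart of the argument is to show the chain is genuinely absorbing with a \emph{uniform} escape bound. Since $0 < \beta < 1$ and $\gamma > 0$, the one-step probabilities $P_{+1}$ and $P_{-1}$ are strictly positive, and $P_0 = (1-\beta)\mu < 1$, so from any interior state there is a positive-probability finite path walking monotonically to either boundary; the chain can never be trapped in the $P_0$ self-loop. Because there are only finitely many transient states, I can take the maximum over them: there exist an integer $N$ and a constant $q \in (0,1]$ such that, from \emph{every} transient state, the chain is absorbed within $N$ slots with probability at least $q$. By the Markov property, conditioning successively on non-absorption at times $N, 2N, \dots$ gives $\Pr[\text{duration} > kN] \le (1-q)^k$, and setting $k = \lfloor n/N\rfloor$ yields $\Pr[\text{duration} > n] \le (1-q)^{\lfloor n/N\rfloor}$, which decays exponentially in $n$. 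Crucially, this reasoning never uses $E = 0$, so it covers the general case left open by Lemma~\ref{lemma_finality}.

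As an equivalent and slightly more quantitative route, I would write the transition matrix in canonical form $P = \begin{pmatrix} Q & R \\ 0 & I \end{pmatrix}$ with $Q$ the substochastic block on the transient states; then the survival probability after $n$ slots from state $i$ is exactly $(Q^{n}\mathbf{1})_i$, and since the chain is absorbing on a finite state space, $Q$ has spectral radius $\rho < 1$, giving $\|Q^n\| \le C\rho^n$ and hence the same exponential tail with an explicit rate $\rho$. I expect the only real obstacle to be making the ``absorbing'' property fully rigorous, namely establishing the uniform lower bound $q$ (equivalently $\rho < 1$) from every transient state; everything after that is bookkeeping. This step is nonetheless immediate here because the state space is finite and each interior state admits an explicit positive-probability monotone path to a boundary, so irreducibility of the transient class toward the absorbing set is automatic.
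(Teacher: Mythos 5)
Your proof is correct, and its decisive ingredient is the same as the paper's: from any state of the width-$F(E+1)$ chain, a run of consecutive jumps in a single direction must hit an absorbing boundary, and such runs occur with probability bounded away from zero in any window of $F(E+1)$ slots. The difference is in packaging and in what is delivered. The paper works this out concretely: it partitions the $n$ slots into $\left\lfloor n/(F(E+1)) \right\rfloor$ non-overlapping windows, observes that a window consists entirely of forward jumps with probability $\beta^{F(E+1)}$ and entirely of backward jumps with probability $\gamma^{F(E+1)}$, and concludes $p_t \le \left(1 - \beta^{F(E+1)} - \gamma^{F(E+1)}\right)^{\left\lfloor n/(F(E+1)) \right\rfloor}$ — an explicit decay rate, which matters because the surrounding text uses this lemma to argue how the resource snapshot delay $K$ can be chosen. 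You instead invoke the general theory of finite absorbing chains (a uniform escape bound $q$ over the finitely many transient states, or equivalently spectral radius of the substochastic block $Q$ strictly below one), using the monotone-path observation only to verify the hypotheses of that theory. Your route is more robust — it is indifferent to the exact jump distribution and generalizes immediately — and it cleanly separates the probabilistic skeleton from the model's specifics; but as stated it yields only the existence of constants $N$ and $q$, and to recover an explicit rate comparable to the paper's you would need to carry out the path computation you sketch at the end (which gives essentially $N = F(E+1)$ and $q = \beta^{F(E+1)} + \gamma^{F(E+1)}$). A minor simplification available in this model: since the jump directions are i.i.d.\ across slots, you do not even need to condition on the state via the Markov property — independence of non-overlapping windows suffices, which is exactly how the paper avoids that machinery.
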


\begin{proof}
Let $t_a$ be the duration of the attack (either successful or not). Let $p_t = P(t_a > t)$ be the probability that the attack has not finished after a duration $t$. 
It suffices to find an upper bound on the probability that the attack has not stopped after $n=t \times T/t_0$ slots. 
One way for the attack to terminate is when a sequence of $n$ slots contains a subsequence of length $F(E+1)$ containing only forward jumps or only backward jumps. Such subsequences happen with probability $\beta^{F(E+1)}$ and $\gamma^{F(E+1)}$, respectively. 
Considering only non overlapping subsequences, shows that the probability that a sequence of length $n$ does not contain any such subsequence is bounded above by $q_t = (1 - \beta^{F(E+1)} - \gamma^{F(E+1)})^{\left\lfloor \frac{n}{F(E+1)} \right\rfloor}$. 
This bound $p_t \le q_t$ implies that $p_t$ decreases exponentially with $n.$
\end{proof}

\subsection{Consistency of the Blockclique Protocol}

In the context of blockchains, a protocol is said \emph{consistent} if it guarantees that all honest nodes eventually agree on the same set of final blocks \cite{kiffer2018better}.
The finality parameter $F$ in the Blockclique architecture implies a risk of a finality fork attack, which can be made arbitrarily small by increasing $F$ at the cost of longer transaction confirmation times.
Here, we formally establish this property by showing that if the sets of final blocks seen by two honest nodes are compatible at a time $t$, then they are still compatible at a time $t+r$ with high probability. The proof involves Lemma~\ref{lemma_finality}, as well as the Blockclique consensus rules.

\begin{theorem}[Consistency]
Consider a Blockclique protocol and a network such that $\beta < \gamma$, $\delta < \frac{t_0}{2}$ and $E=0$. 
Let $\mathcal{F}^t_1$ and $\mathcal{F}^t_2$ denote the sets of final blocks observed at time $t$ by nodes $n_1$ and $n_2$ respectively.
Given $t>0$ and $r>0$, if $\mathcal{F}^t_1$ is compatible with $\mathcal{F}^t_2$, then $\mathcal{F}^{t+r}_1$ is compatible with $\mathcal{F}^{t+r}_2$, except with a probability that drops exponentially in $F$.
\end{theorem}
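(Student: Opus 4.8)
The plan is to reduce the event ``$\mathcal{F}^{t+r}_1$ is incompatible with $\mathcal{F}^{t+r}_2$'' to the event ``a finality fork attack succeeds,'' and then to bound the latter with Lemma~\ref{lemma_finality}. The first ingredient is monotonicity of finality: by the incremental rule of Section~\ref{incremental}, a block declared final by an honest node is never removed, so $\mathcal{F}^{s}_i$ is nondecreasing in $s$ and $\mathcal{F}^{t}_i \subseteq \mathcal{F}^{t+r}_i$ for $i\in\{1,2\}$. Hence, if $\mathcal{F}^{t}_1$ is compatible with $\mathcal{F}^{t}_2$ but $\mathcal{F}^{t+r}_1$ is not compatible with $\mathcal{F}^{t+r}_2$, there exist two mutually incompatible blocks $b_1\in\mathcal{F}^{t+r}_1$ and $b_2\in\mathcal{F}^{t+r}_2$, at least one of which was finalized during the interval $(t,t+r]$.

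First I would show that such a divergence can only be produced by a successful finality fork. Because $\delta<\frac{t_0}{2}$, honest nodes never create incompatible blocks (Section~\ref{mcmodel}); thus $b_1$ and $b_2$ belong to two distinct, mutually incompatible cliques, one of which was built by the attacker. By the finalization criterion of Section~\ref{incremental}, node $n_1$ could only finalize $b_1$ once the clique containing $b_2$ had become stale from its viewpoint (i.e. had fallen behind by more than $\Delta_f^{0}=F(E+1)=F$), while node $n_2$ symmetrically finalized $b_2$. The only way two honest nodes can simultaneously settle incompatible branches is that, as the fitness difference between the two cliques crossed the finality boundary (state $\Delta_f=-(F-1)$ of the Markov chain of Section~\ref{mcmodel}), network delay let the attacker push $\Delta_f$ up to the overtaking state $\Delta_f=0$ for $n_2$ while $n_1$ had already reached $\Delta_f=-F$. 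This is exactly the finality fork of Section~\ref{finality_fork}, whose chain starts at $\Delta_f=-(F-1)$.

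Next I would bound the probability of a single fork. By Lemma~\ref{lemma_finality}, under $\beta<\gamma$, $\delta<\frac{t_0}{2}$ and $E=0$, a finality fork starting at $\Delta_f=-(F-1)$ succeeds with probability
\begin{equation}
p=\frac{\frac{\gamma}{\beta}-1}{\left(\frac{\gamma}{\beta}\right)^{F}-1}\underset{F\to\infty}{\sim}\left(\frac{\gamma}{\beta}-1\right)\left(\frac{\beta}{\gamma}\right)^{F},
\end{equation}
which decays exponentially in $F$ since $\beta<\gamma$ gives $\beta/\gamma<1$. Each incompatible pair $(b_1,b_2)$ is associated to the block slot at which its two cliques diverge and cross the finality boundary; there are at most $rT/t_0$ such slots in $(t,t+r]$, since globally one block slot occurs every $t_0/T$ seconds. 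A union bound over these candidate fork points therefore yields
\begin{equation}
\Pr\!\left[\mathcal{F}^{t+r}_1 \text{ incompatible with } \mathcal{F}^{t+r}_2\right]\le\frac{rT}{t_0}\,p .
\end{equation}
Because the prefactor $rT/t_0$ does not depend on $F$, the bound still drops exponentially in $F$, which proves the claim.

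I expect the main obstacle to be the characterization in the second paragraph: rigorously arguing that every incompatibility between the two honest final sets is realized by a single trajectory of the finality-fork Markov chain reaching $\Delta_f=0$ from its boundary state $\Delta_f=-(F-1)$, and not by some dynamics the chain does not capture. Making this airtight requires carefully combining the consensus rule (honest nodes extend the unique highest-fitness clique and, thanks to $\delta<\frac{t_0}{2}$, never produce incompatible blocks) with the exact finality and staleness thresholds of Section~\ref{incremental}, so that the two incompatible finalizations are forced to coincide with the two absorbing outcomes of one fork rather than being counted incorrectly or missed.
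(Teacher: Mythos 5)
Your proposal follows essentially the same route as the paper's proof: reduce any divergence of the two final-block sets to a successful finality fork attack (using $\delta < \frac{t_0}{2}$ to guarantee honest nodes never create incompatible blocks, so any incompatibility must involve an attacker-built clique), bound the success probability of each such attack by Lemma~\ref{lemma_finality}, and union-bound over the roughly $rT/t_0$ attacks that can occur during the interval. The paper's proof is written forward (from attack scenarios to incompatibility) while yours is written backward (from an incompatible pair to a fork), but the ingredients and the final bound are the same.

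There is, however, one genuine gap in your union bound. You associate each incompatible pair $(b_1,b_2)$ to the slot where its two cliques ``diverge and cross the finality boundary,'' and you claim all such slots lie in $(t,t+r]$. That does not follow from the hypotheses: compatibility of $\mathcal{F}^t_1$ and $\mathcal{F}^t_2$ at time $t$ is consistent with a fork already being in progress at time $t$. For example, $n_1$ may have marked the attack clique stale and finalized $b_1$ before time $t$, while $n_2$ has not yet settled that clique either way; the sets are still compatible at $t$ because $n_2$ has finalized nothing incompatible. If that in-flight attack then reaches the overtaking state $\Delta_f=0$ from $n_2$'s viewpoint during $(t,t+r]$, the resulting incompatibility comes from a fork whose boundary-crossing slot precedes $t$, and your count of ``at most $rT/t_0$ slots in $(t,t+r]$'' misses it. The paper handles exactly this case with one extra sentence (``if a finality fork attack is already happening at time $t$, then its probability to succeed before time $t+r$ is lower than $p$''), so the effective bound is $\bigl(1+\lfloor rT/t_0\rfloor\bigr)p$ rather than $(rT/t_0)p$; this is still exponentially small in $F$, and adding that one term repairs your argument without changing anything else.
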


\begin{proof}

Consider two honest nodes $n_1$ and $n_2$. 
As argued in Sec. \ref{surface}, the only strategy for an attacker starting an attack at time $s$ to make the sets of final blocks of $n_1$ and $n_2$ incompatible, is, given a block $b_{h}^{\tau}$ final at a time $s$ according to one of the two nodes, say $n_1$, but not yet settled according to $n_2$, to create a block $b_{h'}^{\tau'}$ incompatible with $b_{h}^{\tau}$, and try to overtake the blockclique with this alternative clique. 
By Lemma~\ref{lemma_finality}, the probability $p$ of success of this attack is $\exp(-\Omega(F))$.

If a finality fork attack is already happening at time $t$, then its probability to succeed before time $t+r$ is lower than $p$.
As the attacker may spawn other attacks one after the other independently, the probability that one of a maximum of $m=\lfloor\frac{r T}{t_0}\rfloor$ consecutive attacks starting between time $t$ and $t+r$ succeeds before time $t+r$ is lower than $1 - (1-p)^m$.
The probability that the attack started before time $t$ or any attack started between time $t$ and $t+r$ succeeds before time $t+r$ is thus $\exp(-\Omega(F))$.

Given the Nakamoto consensus rule defined in Sec.~\ref{incremental}, if a finality fork attack started at time $s$ with block $b_{h}^{\tau} \in \mathcal{F}^s_1$ and $\notin \mathcal{F}^s_2$ and incompatible block $b_{h'}^{\tau'}$ succeeds between times $t$ and $t+r$ according to $n_2$, then $b_{h'}^{\tau'} \in \mathcal{F}^{t+r}_2$.
In this case (of probability $\exp(-\Omega(F))$), $\mathcal{F}^{t+r}_1$ and $\mathcal{F}^{t+r}_2$ are not compatible. 
If all finality fork attacks started before time $t+r$ failed before time $t+r$, then given the stale block rule, the block $b_{h'}^{\tau'}$ and other blocks present in the attack clique and not in the blockclique become stale also according to $n_2$. In that case, as blocks produced by honest nodes are compatible with each other provided $\delta < \frac{t_0}{2}$, $\mathcal{F}^{t+r}_1$ is compatible with $\mathcal{F}^{t+r}_2$.
If a finality fork attack is still running at time $t+r$ as observed by node $n_2$, and previous attacks failed, then $\mathcal{F}^{t+r}_1$ is compatible with $\mathcal{F}^{t+r}_2$.

\end{proof}

\subsection{Attacking Honest Newcomers}
\label{secret}

Attacking new honest nodes joining the network involves creating and extending an alternative clique until its fitness becomes higher than the current blockclique's, while allowing it to become stale from the point of view of all existing honest nodes. 
Since newcomers are not aware of the finality status of attack clique's blocks, they consider the attack clique as the best clique, which prevents them from synchronizing with other honest actors on the honest clique and causes a network fork.

After the start of the attack clique, and before a change in resource snapshot, safety is ensured when the attack clique's fitness grows slower on average than the current blockclique's.

Given a probability $p$ of block and endorsement creation, 
the expected number of endorsements per block is $pE$, and therefore the expected fitness of a block is $1+pE$.
As the probability of block creation is $p$, the expected fitness increase of a clique per block slot is $p(1+pE)$.

The probability of successful block and endorsement inclusion is $\beta$ in the attack clique, and $\gamma$ in the current blockclique. The expected fitness increase per block slot is therefore $\beta(1+\beta E)$ in the attack clique and $\gamma(1+\gamma E)$ in the honest clique. As a result, the fitness of the attack clique grows on average slower than the one of the honest clique if $\beta < \gamma$.



However, beyond the resource snapshot delay $K$, resources are not guaranteed to be equal in the two cliques anymore.
Moreover, depending on the implementation of the Sybil-resistant selection mechanism, a node may get deactivated due to inactivity to avoid high miss rates.
In the attack clique, honest nodes get deactivated due to inactivity and the attackers become the sole block producers, causing the fitness of the attack clique to overcome the fitness of the current blockclique and the attack to succeed in the long run.

To prevent this attack, the client software downloaded by newcomers should include the hash of a recent final block of the blockclique (called checkpoint), allowing them to discard high-fitness attack cliques during bootstrap. 
The checkpoint must be more recent than the change in resource snapshot to ensure that no highest-fitness attack clique has started after the checkpoint as long as $\beta < \gamma$.

\begin{property}
If a new honest node is provided with a checkpoint more recent than a $K$-seconds resource snapshot delay, and if $\beta < \gamma$, then it is safe against attacks with alternative cliques branching off settled blocks.
\end{property}

For instance, with $\mu=1\%$, the safety condition becomes $\beta \leq 0.497$.

\subsection{Security Constraints on the Parameters}

Our analysis shows that for $F \geq 64$, $E \geq 8$ and $\delta < \frac{t_0}{2}$, the system is robust against architecture-based attacks as long as the attacker resource proportion $\beta$ is below $45\%$.
This particular limit assumes a miss rate of $\mu=1\%$ which is the current one in Tezos.
Furthermore, long term attacks are prevented by providing a recent block hash checkpoint to new nodes when they join the network, in addition to the client software and an IP list of bootstrap nodes.

\section{Performance}
\label{perf}

\begin{figure*}[t]
	\centering
	\subfloat{\includegraphics[height=4.8cm]{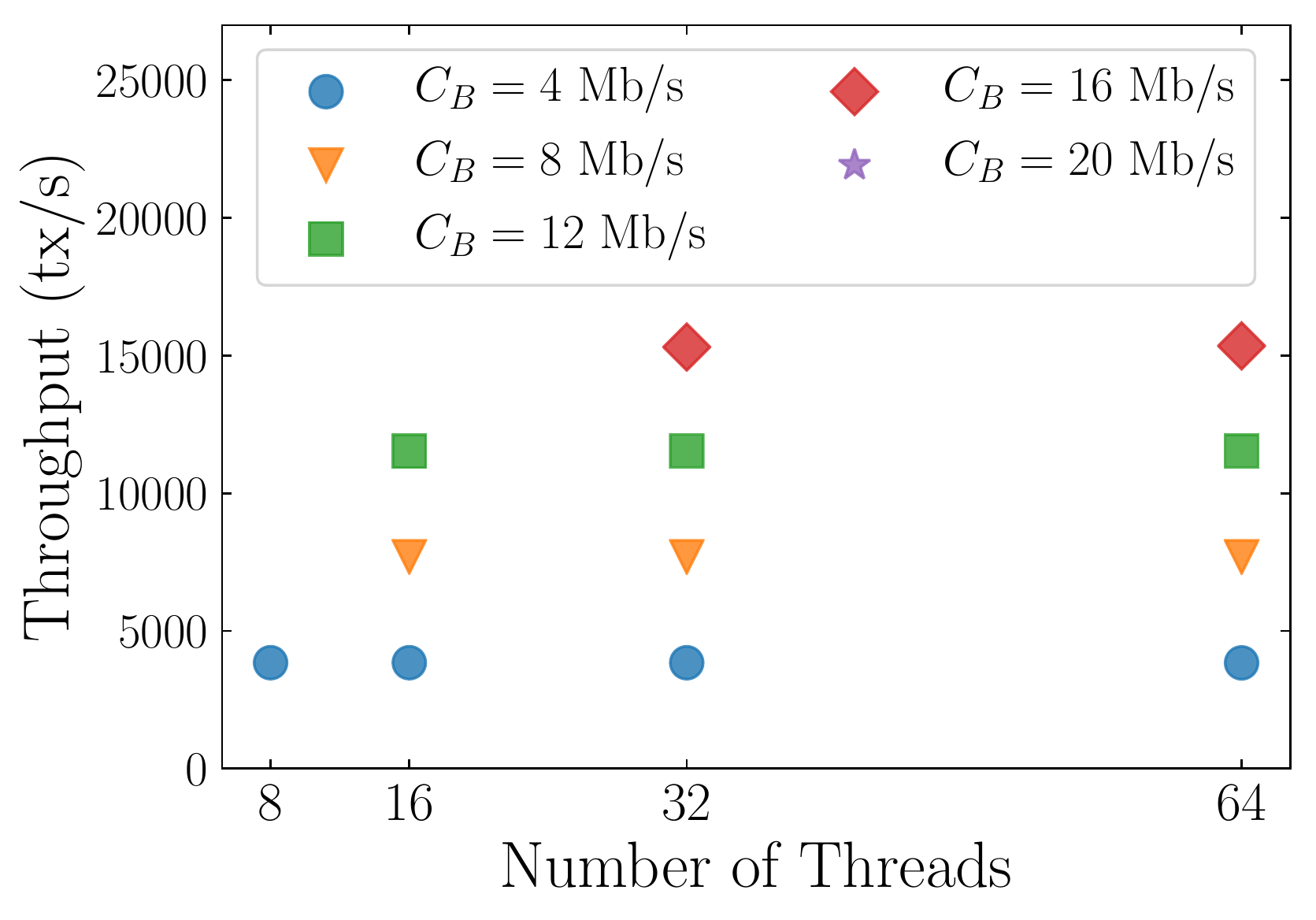}}\hspace{1.5cm}
	\subfloat{\includegraphics[height=4.8cm]{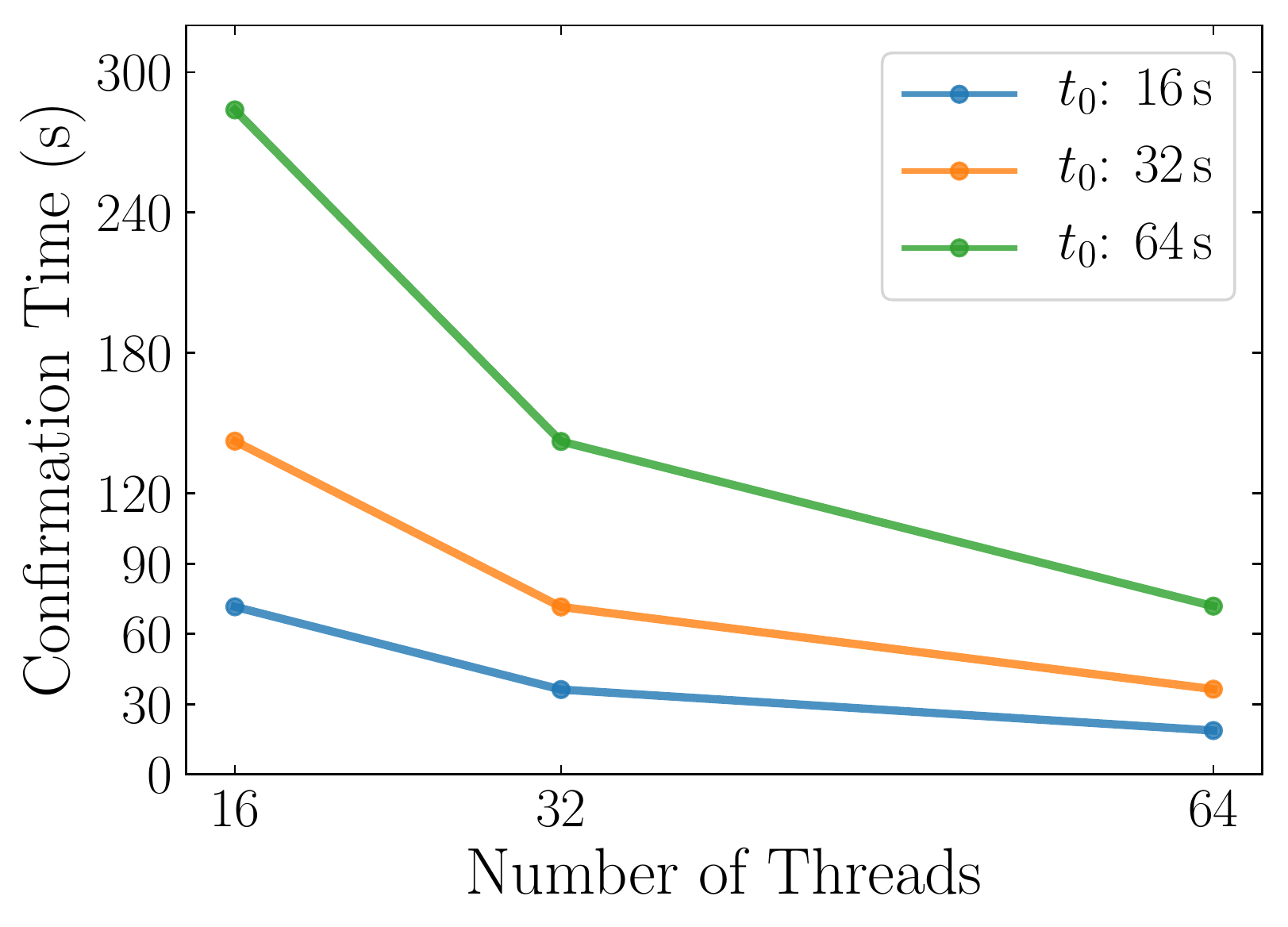}}
	\caption{
	Left: Transaction throughput depending on the number of threads $T$ and consensus bitrate $C_B$, with an inter-block interval $t_0=32$\,s.
	The average throughput over $10$ runs with different seeds is plotted only if all runs yielded a low stale rate ($< 2\%$).
	Right: Confirmation time, depending on $T$, $t_0$, with $C_B=12$\,Mb/s and $F=64$ blocks.
	}
	\label{res:pos_main1}
\end{figure*}

As in blockchains, performance in Blockclique is constrained by protocol parameters and network properties.
We evaluate the performance of Blockclique by simulating a peer-to-peer network of nodes transmitting and verifying blocks, and independently applying the consensus rules when they receive a block.
We estimate the highest consensus bitrate achievable at low stale rate by optimizing architecture parameters under various assumptions on network properties.

In the following case, network properties are chosen to match the estimates in Bitcoin and Ethereum \cite{gencer2018decentralization}: up to $N=4096$ nodes are randomly connected in a peer-to-peer network, with a median bandwidth $B = 32$\,Mb/s, and latency $L=100$\,ms between two connected nodes.
The number of threads $T$, the time between blocks $t_0$, and the block size $S_B$ define the consensus bitrate $C_B=\frac{T.S_B}{t_0}$.
Results show that parallelizing block creation in $T=32$ threads allows to safely reach a consensus bitrate $C_B=12$\,Mb/s in a network of $N=4096$ nodes, leading to a transaction throughput above $10000$\,tx/s.
Our implementation is open-source\footnote{\href{https://gitlab.com/blockclique/blockclique}{\small \texttt{gitlab.com/blockclique/blockclique}}}.

\subsection{Simulation Methods}

\subsubsection{Peer-to-peer Network}

The peer-to-peer network is generated as a directed graph of $N$ nodes with random connections between peers.
Each node has a particular upload bandwidth $b$ for sending blocks, randomly sampled at the beginning of the simulation between $\frac{1}{2}B$ and $\frac{3}{2}B$ where $B$ is the average upload bandwidth of all nodes.
Each node sends blocks one by one sequentially at the maximum speed of its upload bandwidth, and with a random latency depending on the destination node.
The latency between two nodes is sampled at the beginning of the experiment between $0$\,ms and $2 L$ where $L$ is the average latency between two nodes of the network.
Given its upload bandwidth $b$, a node is connected to a number  $\lfloor 4 b/B \rfloor$ of random successors.

When receiving a block, nodes verify the block and its transactions before forwarding it to their successors in the network graph.
Those verifications are simulated in the sense that we only consider a theoretical time needed for the node to verify the block and the transactions.
The block verification time is set to $50$\,ms, and the transaction verification time to $0.025$\,ms per transaction included in the block.
Each time a node receives a new block, its compatibility graph and blockclique are updated given its locally observed blocks, so that when creating a new block, it extends its local blockclique.



%

\subsubsection{Sybil-Resistance}	

Nodes are uniformly selected to produce blocks in particular threads at particular times, modeling a uniform resource distribution.
We implement this process by seeding a pseudo-random generator with the thread and slot numbers of a block before sampling a node that will have a right to produce a block in that slot of that thread.
A node produces a block in a thread as soon as the timestamp of the slot is reached by the computer clock.

\subsubsection{Blocks and Transactions}

Blocks are assumed to be full of transactions.
The size $S_{tx}$ of a transaction is set to $S_{tx}=1040$\,bits (transaction with one input and one output).
The maximum number of transactions per block is thus $\frac{S_B-S_H}{S_{tx}}$, where $S_B$ is the block size and $S_H$ the block header size.
We do not simulate endorsements for simulation time reasons, however their impact on transaction throughput is expected to be minimal as one endorsement is about the same size as a transaction.

\subsection{Results}

\subsubsection{Optimization of Blockclique Parameters}

In this section, we assume a network of $N=1024$ nodes, with median bandwidth $B=32$\,Mb/s, and median latency $L=100$\,ms between nodes.

We evaluate different architecture parameters to assess which consensus bitrate is viable in the range $C_B=4$\,Mb/s to $20$\,Mb/s, depending on the number of threads $T$, the inter-block time $t_0$ and a block size constrained by the other parameters: $S_B=\frac{C_B.t_0}{T}$.
In order to test the network under maximal load, nodes are assumed to produce all blocks ($\mu=0$).

Figure~\ref{res:pos_main1}(a) shows the average transaction throughput over $10$ runs with different seeds.
Whenever one of the $10$ runs shows a significant stale rate (more than the $2\%$ of Bitcoin), the corresponding point is not plotted.
We only report the case $t_0=32$\,s as other values ($t_0=16$ and $64$\,s) yielded the same results.

Our results show that the maximum consensus bitrate that can be achieved with a low stale rate increases with the number of threads up to $T=32$ threads, achieving $C_B=4$, $12$, $16$ and $16$\,Mb/s for $T=8$, $16$, $32$ and $64$ threads.
The corresponding transaction throughput is for instance $15307$\,tx/s with $T=32$ threads, $t_0=32$\,s, and a block size $S_B=16$\,Mb ($C_B=16$\,Mb/s).

In a separate experiment, we evaluate the transaction throughput when nodes miss a proportion $\mu$ of the blocks.
For $T=32$, $t_0=32$\,s, and $C_B=12$\,Mb/s, the resulting throughput is proportional to $(1-\mu)$: $11532$, $10342$, $9218$ and $8070$\,tx/s for $\mu=0$, $0.1$, $0.2$, and $0.3$.

Figure~\ref{res:pos_main1}(b) shows the measured transaction confirmation time as a function of the number of threads $T$ and the inter-block interval $t_0$, with a consensus bitrate $C_B=12$\,Mb/s and $F=64$ blocks.
The confirmation time is approximately the block finality time $\frac{F.t_0}{T}$ plus the time $t_{1/2}$ for a block to be broadcast to most of the nodes.
With $T=32$ threads, the confirmation time is $36$\,s, $72$\,s and $142$\,s for $t_0=16$\,s, $32$\,s and $64$\,s respectively. We measure network latencies $t_{1/2}=4$\,s, $7$\,s and $13$\,s respectively (average time to broadcast a block to half the network, when the block size is $S_B=6$, $12$ and $24$\,Mb).

\subsubsection{Influence of Network Properties}

In the previous section, we assumed a network with the following properties: $N=1024$, $B=32$\,Mb/s, and $L=100$\,ms, and studied the architecture performances depending on its parameters.
Here, we provide additional results evaluating the influence of network properties on the best achievable consensus bitrate.

When the number of nodes is increased to $N=4096$, keeping $B=32$\,Mb/s, $L=100$\,ms, $T=32$ threads and $t_0=32$\,s, the network reaches a viable consensus with a stale rate below $1\%$ up to $C_B=12$\,Mb/s, yielding an average transaction throughput of $11,500$\,tx/s.
When the median latency is varied from $L=50$ to $150$\,ms, keeping $N=1024$, $B=32$\,Mb/s, $T=32$ and $t_0=32$\,s, the network remains stable with $C_B=12$\,Mb/s.
Further simulations at a very low average bandwidth $B=4$\,Mb/s show that for $N=1024$, $L=100$\,ms, $T=32$ threads and $t_0=32$\,s, the network supports a consensus bitrate up to $C_B=2$\,Mb/s, leading to a transaction throughput of about $2,000$\,tx/s.

Overall, our results show that the Blockclique architecture efficiently uses the underlying network, yielding high consensus bitrates relative to the bandwidth of nodes even in large networks.

\section{Discussion}

The Blockclique architecture combines three main ideas that together make scaling possible: the data structure is a multithreaded block DAG where each block references one previous block of each thread, transaction sharding separates transactions into multiple threads based on their input address so that blocks created independently in different threads have compatible transactions by construction, and the consensus rule extends Nakamoto consensus to leverage the parallel creation of blocks.

Blockclique is a simple extension of Nakamoto blockchains, where each node receives and verifies all blocks and transactions of all threads. 
As transactions are distributed into threads but all nodes process them, there is no need for cross-shard communications other than cross-thread parent links in the graph. 
Transaction sharding guarantees that no double-spend can happen even when blocks are created in parallel.

Parallel threads with minimal inter-thread synchronization requirements, and lack of multiple inclusions of the same transactions result in efficient, smooth and predictable network usage, bringing the consensus bitrate close to the capacity of the network.

Blockclique is secure against attacks aiming at tampering with transaction history, forking the network or denying service within a broad range of parameters, assuming that the proportion of resources owned by the attacker is below $45\%$.

Our network simulations show that a highly multithreaded block graph efficiently uses available network bandwidth and reaches a transaction throughput of more than $10,000$~tx/s with a stable consensus in a large decentralized network of thousands of nodes.
Based on network parameter estimates in Bitcoin and Ethereum, we assumed a median bandwidth of $32$\,Mb/s, but a different assumption would scale the maximum possible transaction throughput accordingly.
In principle, it is possible to dynamically adjust some aspects of the network such as the number of threads through a fork of the client code or an upgrade through a governance mechanism such as the amendment process in Tezos.
However, our results show that $T=32$ threads are suitable for a wide range of realistic bandwidth and latency values.

As Blockclique uses a ledger-based approach (instead of UTXOs), it is possible to store extra data and programs for each address, and design specific types of transactions that act on them, in order to implement a smart contract system. The sharding process however, requires that each smart contract lives in a specific thread, or uses sharding logic by itself.


\bibliographystyle{abbrv}
\bibliography{bibliography}

\end{document}